\newtheorem{theorem}{Theorem}
\newtheorem{definition}{Definition}
\newcommand{\myparagraph}[1]{\smallskip \noindent\textbf{#1}}  
\newcommand{\card}[1]{|#1|}
\newcommand{\mydef}{\triangleq}
\newcommand{\labfun}{\psi}  
\newcommand{\lit}[1]{\bar{#1}}   
\newcommand{\Pg}{P}   
\newcommand{\loc}{loc}   
\newcommand{\pred}{\phi}   
\newcommand{\Lab}{L}   
\newcommand{\lab}{l}   
\newcommand{\TS}{TS}   
\newcommand{\dt}{t}  
\newcommand{\covers}{\leadsto} 
\newcommand{\fullcovers}{\leadsto} 
\newcommand{\C}{{\bf C}} 
\newcommand{\LC}{{\bf LC}}  
\newcommand{\IC}{{\bf IC}}  
\newcommand{\DC}{{\bf DC}}  
\newcommand{\CC}{{\bf CC}}  
\newcommand{\DCC}{{\bf DCC}}  
\newcommand{\MCC}{{\bf MCC}}  
\newcommand{\WM}{{\bf WM}}  
\newcommand{\SM}{{\bf M}}  
\newcommand{\Paths}{Paths}  
\newcommand{\pathcrawler}{{\sc PathCrawler}}
\begin{document}

\IEEEoverridecommandlockouts

\title{Efficient Leverage of Symbolic ATG Tools \\ to Advanced Coverage Criteria}
%

\author{\IEEEauthorblockN{Sébastien Bardin\IEEEauthorrefmark{1},
Nikolai Kosmatov\IEEEauthorrefmark{1} and 
Fran\c{c}ois Cheynier\IEEEauthorrefmark{1}}

\medskip 

\IEEEauthorblockA{\IEEEauthorrefmark{1} CEA, LIST, Saclay, France \\ Email: {\tt first.name@cea.fr}}
}

\author{\IEEEauthorblockN{Sébastien Bardin}
\IEEEauthorblockA{ CEA, LIST\\
        Gif-sur-Yvettes\\ 
        91191, France\\ 
 \texttt{\small sebastien.bardin@cea.fr}
}
\and
\IEEEauthorblockN{Nikolai Kosmatov}
\IEEEauthorblockA{CEA, LIST\\
        Gif-sur-Yvettes\\ 
        91191, France\\ 
 \texttt{\small nikolai.kosmatov@cea.fr}
}
\and
\IEEEauthorblockN{Fran\c{c}ois Cheynier}
\IEEEauthorblockA{CEA, LIST\\
        Gif-sur-Yvettes\\ 
        91191, France\\ 
 \texttt{\small francois.cheynier@gmail.com}
}
}


\date{\ }

\maketitle
\begin{abstract}
 Automatic test data generation (ATG) is a major topic in software engineering. 
In this paper, we seek to bridge the gap between the coverage criteria supported by symbolic ATG tools     
and the most advanced coverage criteria found in the literature. 
 We define a new testing criterion, label coverage, and prove it to be both expressive and 
 amenable to efficient automation.     
We propose   several innovative techniques resulting in an effective black-box support for label coverage, while a direct approach induces an exponential blow-up 
of the search space. 
Initial experiments show that ATG for label coverage can be achieved 
 at a reasonable cost and that our optimisations yield  very  significant savings. 
%
%
\end{abstract}












\begin{IEEEkeywords}
Testing, symbolic execution, coverage criteria
\end{IEEEkeywords}

\section{Introduction}  



\myparagraph{Context and problem.} Automatic test data generation (ATG) is a major concern in software engineering and program analysis. 
Recent progress in automated theorem proving led to significant improvements of  symbolic approaches for white-box ATG, such as Dynamic Symbolic Execution  (DSE) \cite{GKS-05,SMA-05,WMM-04a}.  
Tools have been developed \cite{BH-11,CDE-08,CGPDE-06,GLM-08,TH-08} and  impressive case-studies have been carried 
out \cite{CDE-08,CGPDE-06,GLM-12}.  
%
%

 DSE follows mostly an exhaustive exploration of the path space of the program under test, covering all  execution paths up to a given bound. 
While this ``all-path coverage'' criterion proves successful in some contexts, it is well known that the resulting test suite can still miss bugs related to data rather than control. 
%
Moreover, standard DSE does not support coverage objectives defined over artifacts not explicitly present in the source code, such as  multiple-condition coverage.  
%
%

On the other hand, many coverage criteria have been defined along the years \cite{AO-08}, ranging from control-flow or data-flow  criteria to mutations \cite{DLS-78},   
input domain partitions and MCDC. But only very few of them are incorporated inside DSE tools, while they could efficiently guide test generation.



\myparagraph{Goal.}  Our main objective is to bridge the gap between coverage criteria supported by symbolic ATG tools, especially DSE,  
and the most advanced coverage criteria found in the literature. 
Recent works aim at leveraging DSE to mutation testing \cite{PM-10,PM-11,PMK-10}  or  improving DSE bug-detection abilities 
by making explicit run-time error conditions~\cite{CKGJ-10,GLM-08b,KA-96}. Interestingly, these approaches are mainly based on 
instrumentation and allow for black-box reuse of existing ATG tools. However, they come at a high price since they induce a blow-up of the path space and a significant overhead.    
We follow the same general line, emphasising black-box reuse as much as possible.    
However, we focus on  two main points mostly left unaddressed:  we want to characterize  which kind of coverage criteria can be  
supported by DSE-like techniques, and we want to support them efficiently.   

\myparagraph{Approach.} We define \textit{label coverage}, a new testing criterion which appears to be both expressive and amenable to efficient automation.  
Especially, it turns out that {\it DSE can be extended for label coverage with only a slight overhead}. 
Labels are predicates attached to program instructions through a labelling function. A label is covered if a test execution reaches it and satisfies the predicate.  
This idea underlies former work on the subject \cite{CKGJ-10,GLM-08b,KA-96,PMK-10}. We  generalize these results and propose ways of taming the potential blow-up.  
Especially, we introduce a {\it tight instrumentation}, where ``tight'' is made precise in the paper,  and a strong coupling of DSE and label coverage named  {\it iterative label deletion}. 
The combination results in an effective support for label coverage in DSE. 
Interestingly, both techniques can be implemented in black-box.

\myparagraph{Contribution.} Our main contributions are the following: 
\begin{itemize} 
\item 
We show that label coverage is expressive enough to faithfully emulate many standard coverage criteria, from decision or condition coverage to input domain coverage 
and a substantial subset of weak mutations (the side-effect free fragment, Theorem~\ref{thm:wm-lc}). 

\item We formally characterise the properties of  direct instrumentation for label coverage. 
 We show that the instrumentation is sound (w.r.t.~label coverage) and  
leads to very efficient coverage score computation. However,  it is very ineffective for any analysis working through path exploration, 
as it yields an exponential increase as well as  a ``complexification'' of the path space (Theorem~\ref{thm:non-tight}).

\item  We propose \textit{tight instrumentation} and \textit{iterative label deletion} as ways of taming this complexity blow-up. 
Tight instrumentation  yields   
only a linear growth  of the path space without any complexification (Theorem~\ref{thm:tight-tight}).  
Both techniques are orthogonal and allow for a significant speed-up. Moreover, they can both be implemented either through dedicated DSE algorithms or in a black-box manner.

\item We have implemented these results inside a DSE tool \cite{WMM-04a}.  Initial experiments on small benchmarks show that ATG for label coverage can be achieved 
 at a reasonable cost  w.r.t.~the usual (all-path) DSE approach,  while our optimisations yield  very  significant reductions of both  search space and computation time  
compared to  direct instrumentation.  


\end{itemize}

\noindent As a whole, label coverage forms the basis of a very generic and convenient framework for test automation, providing a powerful specification mechanism for test objectives 
and featuring  
   efficient integration into symbolic ATG techniques as well as  cheap coverage score computation. Moreover, static analysis techniques can also be used 
  directly on the instrumented programs in order to  detect uncoverable labels, as was proposed for mutation testing~\cite{OC-94}.


This work bridges part of the gap between symbolic ATG techniques and coverage criteria. On the one hand, we show that  
DSE techniques  can be cheaply extended to more advanced testing criteria, such as side-effect free weak mutations. 
%
 On the other hand, we identify a large subclass of weak mutations amenable to  efficient automation, both in terms of ATG and mutation score computation.  

\myparagraph{Outline.} The remaining part of the paper is structured as follows. After presenting  basic notation (Section~\ref{sec:notations}), we define labels and explore their expressiveness (Section~\ref{sec:labels}). We then focus on automation. The direct instrumentation is defined and studied (Section~\ref{sec:naive}).  Afterwards, we describe our own approach to label-based ATG 
 (Section~\ref{sec:smart})  and first experiments are presented (Section~\ref{sec:experiments}). Finally, we sketch a highly automatized testing framework based on labels (Section~\ref{sec:framework}), discuss related work 
 (Section~\ref{sec:related}) and give a conclusion (Section~\ref{sec:conclusion}).

\section{Background} \label{sec:notations}

\subsection{Notation}

 Given a program $\Pg$ over a vector of input variables $V$ taking values in some domain $D$, 
a test data $\dt$ for $\Pg$ is any valuation of $V$, i.e.~$\dt \in D$.  The execution of $\Pg$ over $\dt$, denoted $\Pg(\dt)$, is  formalized as a path (or run) $\sigma \mydef\ (\loc_1,S_1) \ldots (\loc_n,S_n)$, where  the $\loc_i$ denote control-locations (or control-points, or simply locations) of $\Pg$ and the $S_i$ denote 
the successive internal states of $\Pg$  ($\approx$ valuation of all global and local variables as well as memory-allocated structures) before the execution of  each $\loc_i$.  
A test data $\dt$ reaches a specific location $\loc$ with internal state $S$, denoted  $\dt \covers_P (\loc, S)$, if $\Pg(\dt)$ is of the form $\sigma_1 \cdot (\loc,S) \cdot \sigma_2$.   
A test suite $\TS$ is a finite set of test data.

Given a test objective {\bf c}, we write  $\dt \covers_P \text{\bf c}$ if test data $\dt$ covers {\bf c}. 
We extend the notation for a test suite $\TS$ and a set of test objectives {\bf C}, writing   $\TS \covers_P \text{\bf C}$ 
when  
for any $\text{{\bf c}} \in \text{{\bf C}}$, there exists $\dt \in \TS$ such that $ \dt \covers_P \text{\bf c}$.

The above definitions are generic and leave the exact definition of ``covering'' to the considered  testing criterion.  
%
%
For example,  
test objectives derived from the Decision Coverage criterion are of the form $\text{\bf c} \mydef\ (\loc, \text{\tt cond})$ or $\text{\bf c} \mydef\  (\loc, \text{\tt !cond})$, where {\tt cond} is 
 the condition of the branching instruction at location $\loc$.   Here,  
  $\dt \covers_P \text{\bf c}$ if   $\dt$ reaches some  $(\loc, S)$ where {\tt cond} evaluates to {\it true} (resp.~{\it false}) in $S$.

\subsection{DSE in brief} 

We  remind here  a few basic facts about Symbolic Execution (SE) \cite{Kin-70} and  Dynamic Symbolic Execution (DSE) \cite{GKS-05,SMA-05,WMM-04a}.  
Let us consider a program under test $\Pg$ with input variables $V$  over domain $D$ and a path $\sigma$ of $\Pg$.  
The key insight of SE is that it is possible in many cases  to compute a {\it path predicate} $\pred_{\sigma}$ for $\sigma$ such that for any input valuation  
 $\dt \in D$, we have:  $ \dt$ satisfies $\pred_{\sigma}$  iff $\Pg(\dt)$ covers $\sigma$.  In practice, path predicates are often under-approximated  and only the  
  left-to-right implication holds, which is already fine  for testing:  SE  outputs a set of pairs $(\dt_i,\sigma_i)$ such that each $\dt_i$  is ensured to cover the corresponding 
$\sigma_i$. Hence, SE is  {\it sound} from a testing point of view.   
DSE  enhances SE by interleaving concrete and symbolic executions. The dynamically collected information  can   help the symbolic step, for example by 
suggesting relevant approximations.

A simplified view of SE is depicted in Algorithm~\ref{algo:se}. While high level, it is sufficient to understand the rest of the paper.  
We assume that the set of paths of $\Pg$, denoted  $\Paths(\Pg)$, is finite. In practice, DSE  tools enforce this assumption through a bound on path lengths. 
We assume the availability of a procedure for  path predicate computation (with predicates in some theory $T$), as well as the availability   of 
  a solver taking a formula $\phi \in T$ and returning either {\it sat} with a solution $\dt$ or {\it unsat}. All  DSE tools rely on such procedures.   The algorithm builds iteratively a test suite $\TS$  by exploring 
all paths  from $\Paths(\Pg)$.

\begin{algorithm}
\KwIn{a program $\Pg$ with finite set of paths  $\Paths(\Pg)$ }
\KwOut{$\TS$, a set of pairs $(\dt,\sigma)$ such that $\Pg(\dt) \covers_{\Pg} \sigma$ }

 $\TS$ := $\emptyset$\; 
 $S_{paths}$ :=   $\Paths(\Pg)$\;

 \While{$S_{paths} \neq \emptyset$}{
    choose $\sigma \in S_{paths}$; $S_{paths}$ := $S_{paths} \backslash \{\sigma$\}  \;

    compute path predicate $\pred_{\sigma}$ for $\sigma$ \;

\Switch{solve($\pred_{\sigma}$)}{

\lCase{sat($\dt$): }{ $\TS$ := $\TS \cup  \{(\dt,\sigma)\}$}

\lCase{unsat: }{ skip }

}
 }

\Return{$\TS$}\;

\caption{Symbolic Execution algorithm}\label{algo:se}
\end{algorithm}

The major issue here  is that SE and DSE must in some ways explore all $\Paths(\Pg)$. Advanced tools explore this set lazily,  
yet they still have to crawl it. 
Therefore, the size of $\Paths(\Pg)$, denoted  $\card{\Paths(\Pg)}$,  is one of the two major bottlenecks of SE and DSE, 
the other one being the average cost of solving path predicates.  

Bounded model checking (BMC) \cite{CKL-04} is sensitive to the same parameters, as it amounts to building a large formula encompassing all paths up to a given length. 
Especially, more paths yield  larger formulas with more $\vee$-operators.

\section{Label coverage}  \label{sec:labels}


     \subsection{Definitions}

Given a program $\Pg$, 
a \textit{label} $\lab$ is a pair $(\loc,\pred)$ where $\loc$ is a location of $\Pg$ 
and $\pred$ is a predicate obeying the following rules:  
\begin{itemize}
\item $p$ contains only variables and expressions well-defined in  $\Pg$ at location $\loc$; 
\item $p$ contains no side-effect expressions. 
\end{itemize}

An {\it annotated program} is a pair $\langle \Pg,\Lab \rangle$  where $\Lab$ is a set of labels defined over $\Pg$.  
A test data $\dt$ covers $\lab \mydef\ (\loc,\pred)$, denoted $\dt \covers_{\Pg} \lab$, if $\dt$ covers some $(\loc, S)$   with $S$ satisfying predicate  $\pred$.   
%
%
The label coverage testing criterion will be denoted by \LC.

For simplicity, we consider in the rest of the paper  {\it normalized programs}, i.e.~programs such that no side-effect occurs in 
any condition of a branching instruction. This is not a severe restriction since any (well-defined) program $\Pg_1$ can be   
rewritten into a normalized program $\Pg_2$, using intermediate variables to evaluate the  
  side-effect prone  conditions outside the branching instruction. For example, {\tt if (x++ $<$= y \&\& e==f)} becomes 
{\tt b = (x++ $<$= y); if (b \&\& e==f)} or {\tt tmp = x++; if (tmp $<$= y \&\& e==f)}. 
Notice that similar transformations are automatically performed 
by the Cil library \cite{Necula02cil} 
frequently used by DSE tools for C programs \cite{SMA-05,WMM-04a}.

\subsection{Expressiveness of label coverage}


We seek to characterize the power of the \LC\ testing criterion.  
We prove in Theorem~\ref{thm:standard-lc} and Theorem~\ref{thm:wm-lc}  that labels allow  to simulate many standard testing criteria, including \DC\ (decision coverage),
\MCC\ (multiple-condition coverage) and a large subset of \WM\ (weak mutations). 
A key notion is  that of {\it labelling function}. A labelling function $\labfun$ maps a program $\Pg$ into an annotated program $\langle \Pg, \Lab \rangle$.  
We write $\LC_{\labfun}$ to denote  the coverage of  labels defined by $\labfun$.

\begin{definition}\label{def:simulation}
 A coverage criterion \C\ {\it can be simulated by} \LC\  if there exists a labelling function $\labfun$ such that for any program $\Pg$, 
a test suite $\TS$ covers \C\ iff  $\TS$ covers \LC$_{\labfun}$.
\end{definition}


We show first how \LC\ can simulate basic graph and logic coverage criteria. 
We consider the following coverage criteria: instruction coverage \IC, 
decision coverage \DC, (simple) condition coverage \CC, decision-condition coverage \DCC\   and multiple-condition coverage \MCC. 
The basic idea is to introduce in $\Pg$ labels based on branching predicates and their atomic conditions. 
An  example for \CC\ is depicted in Figure~\ref{fig:lc-simulates-standard}, where  additional labels (right) 
enforce coverage of the two atomic conditions {\tt x==y} and {\tt a<b}. 

\begin{figure}[htbp]
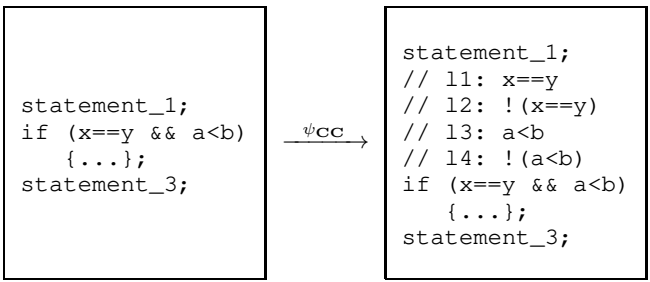


\begin{center} \small \tt
\setlength{\columnseprule}{0cm}
\begin{tabular}{|c|c|c|} 
\cline{1-1}\cline{3-3}
  && \\ 
\begin{lstlisting}
statement_1;
if (x==y && a<b) 
   {...};  
statement_3; 
\end{lstlisting}
&$ \xrightarrow{\ \labfun_{\CC} \ }$&
\begin{lstlisting}
statement_1;
// l1: x==y 
// l2: !(x==y)
// l3: a<b
// l4: !(a<b)
if (x==y && a<b) 
   {...}; 
statement_3;
\end{lstlisting}\\
  && \\ 
\cline{1-1}\cline{3-3}
\end{tabular}
\end{center}

\caption{Simulating \CC\ with labels}\label{fig:lc-simulates-standard}
\end{figure}

\begin{theorem}\label{thm:standard-lc} 
The coverage criteria  \IC, \DC, \CC, \DCC\ and \MCC\ can be simulated by \LC. 
\end{theorem}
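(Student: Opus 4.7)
\medskip
\noindent\textbf{Proof plan.} The plan is to exhibit, for each of the five criteria, an explicit labelling function $\labfun$ and argue bidirectional equivalence using Definition~\ref{def:simulation}. For a program $\Pg$ in normalized form, I define the following labellings. For \IC, attach to every control-location $\loc$ the label $(\loc,\mathit{true})$. For \DC, at every branching location $\loc$ with condition $\mathtt{cond}$, attach the two labels $(\loc,\mathtt{cond})$ and $(\loc,\neg\mathtt{cond})$. For \CC, at every branching location $\loc$ whose condition has atomic sub-conditions $c_1,\ldots,c_k$, attach the $2k$ labels $(\loc,c_i)$ and $(\loc,\neg c_i)$ for $i=1,\ldots,k$. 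For \DCC, take the union of the labels introduced by \DC\ and \CC. For \MCC, at every branching location $\loc$ with atomic sub-conditions $c_1,\ldots,c_k$, attach the $2^k$ labels $(\loc,\bigwedge_{i=1}^k \lit{c_i})$ ranging over all sign combinations $\lit{c_i}\in\{c_i,\neg c_i\}$.

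The first step of each equivalence is to observe that each of these predicates is a well-formed label: the atomic conditions $c_i$ and the decision $\mathtt{cond}$ are built from variables and expressions in scope at $\loc$, and by the normalized-program assumption they are free of side-effects. Hence the labelling function is well-defined in the sense of Section~\ref{sec:labels}. Next, I would pick any of the five criteria, say \MCC, and argue both directions of the equivalence. If $\TS\covers_P \MCC$, then for every branching location $\loc$ and every sign combination $\vec{\lit{c}}$, some $\dt\in\TS$ reaches $(\loc,S)$ with $S$ assigning $c_i$ the value prescribed by $\vec{\lit{c}}$; this $\dt$ covers the corresponding label. Conversely, if $\TS$ covers all \MCC\ labels, then for each branching location and each sign combination there is a covering test, which by definition of label coverage reaches $\loc$ in a state matching that combination, so \MCC\ is satisfied. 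The arguments for \IC, \DC, \CC, and \DCC\ are analogous and in fact simpler, since the labels directly mirror the objectives of each criterion; \DCC\ follows immediately from \DC\ and \CC\ by taking the union of labels.

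The only real subtlety is making sure that evaluating a condition inside a label yields the same truth value as evaluating it inside the branching instruction. This is where the normalized-program assumption pays off: because conditions contain no side-effects, the state $S$ at location $\loc$ determines the value of $\mathtt{cond}$ (or its atomic sub-conditions) unambiguously, whether we interpret them from the branching instruction or from a label attached at $\loc$. Without normalization, one could imagine a side-effecting condition whose truth value differs between the label and the branch; normalization rules this out. Thus I expect the main care of the proof to be notational bookkeeping across the five criteria, while the proof itself essentially consists of instantiating the labelling schemes above and reading off the equivalence from the semantics of $\covers_P$ given in Section~\ref{sec:notations}.
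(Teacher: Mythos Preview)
Your proposal is correct and follows essentially the same approach as the paper: the five labelling functions you define for \IC, \DC, \CC, \DCC\ and \MCC\ coincide exactly with the paper's, and your bidirectional argument via Definition~\ref{def:simulation} mirrors the paper's reasoning. If anything, your write-up is more thorough than the paper's own proof, which only spells out the equivalence for \IC\ and lists the remaining labelling functions without further comment; your explicit treatment of well-formedness and of the role of the normalized-program assumption is a welcome addition that the paper leaves implicit.
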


\begin{proof}
We need to define a suitable labelling function  for any of the considered coverage criteria. 
For \IC,  we choose the labelling function $\labfun_{\IC}(\Pg)$ adding all labels of the form $(\loc,true)$,  where $\loc$ is any location of $\Pg$. 
Given a test suite $\TS$, 
$\TS \covers_{\Pg} \IC$ iff $\TS$ can reach any $\loc$ of $\Pg$  iff $\TS$ covers any $(\loc,true)$ iff $\TS \covers_{\Pg} \LC_{\labfun_{\IC}}$.  We  conclude that 
\IC\ can be simulated by \LC. 

Other criteria are handled similarly.  
%
The labelling function $\labfun_{\DC}$ adds the set of all $(\loc,\pred)$ and $(\loc,\neg \pred)$, where  $\loc$ contains a conditional statement 
with condition $\pred$. 
%
The labelling function $\labfun_{\CC}$ adds the set of all $(\loc,a_i)$ and $(\loc,\neg a_i)$, where   $\loc$ contains a conditional statement 
whose atomic conditions  are exactly the $a_i$.  
The labelling function  $\labfun_{\DCC}$ adds the union of  $\labfun_{\DC}$ and $\labfun_{\CC}$. 
The labelling function $\labfun_{\MCC}$ adds  the set of all $(\loc, \bigwedge_i \lit{a_i})$, where the $a_i$ are atomic conditions and $\lit{a_i}$ denotes either $a_i$ or $\neg a_i$.  
\end{proof}

\myparagraph{Weak mutations.} We now consider a more involved testing criterion, namely weak mutations. In mutation testing \cite{DLS-78}, test objectives consist 
of mutants, i.e.~slight syntactic modifications of the program under test. In the strong mutation setting \SM, a mutant $M$ is covered (or killed) by a test data $\dt$   
if the  output of $P(\dt)$ differs from the output of $M(\dt)$. In the weak mutation setting \WM\ \cite{Howden-82},  a  mutant $M$ is covered  by  $\dt$, denoted $\dt \covers_{\Pg} M$,   
if  the internal states of $P(\dt)$ and $M(\dt)$ differ from each other right after the mutated location (cf.~Figure~\ref{fig:strong-weak}). \SM\ is  a
 powerful testing criterion in practice~\cite{ABL-05,OU-01}. 
While less powerful in theory, \WM\ appears to be 
almost equivalent to \SM\ in practice \cite{OL-94}.

\begin{figure}[htbp]
\begin{center}
 \includegraphics[width=\columnwidth]{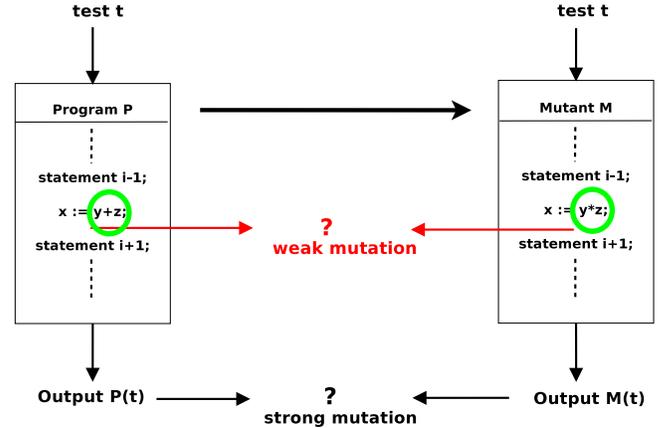}
\end{center}
\caption{Strong and weak mutations}\label{fig:strong-weak}
\end{figure}

We show hereafter that a substantial part of \WM\ can be simulated by \LC. 
First we need a few more definitions. Mutation testing is parametrised by a set of {\it mutation operators} $O$.  
A mutation operator $op \in O$ is a function mapping a program $\Pg$ into a finite set of well-defined programs (mutants), such that $\Pg$ differs from each mutant $M$ in only one 
location (atomic mutation). We denote \WM$_{O}$ the weak mutation criterion resticted to mutants created through operators in $O$.  
We consider that mutations can affect either a lhs value, an expression or a condition.  
This is a very generic model of mutations, encompassing all standard operators~\cite{AO-08}.  
%
%
%
%
%
%
%
Finally, we restrict ourselves to mutation operators neither affecting nor introducing side-effect expressions (including calls to side-effect prone functions).  
We refer to such operators as {\it side-effect free mutation operators}.

\begin{theorem}\label{thm:wm-lc} 
For any finite set $O$ of side-effect free mutation operators, 
\WM$_{O}$ can be simulated by \LC.  
\end{theorem}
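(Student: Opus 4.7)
The plan is to construct, for any finite side-effect free set of mutation operators $O$, a labelling function $\labfun_{\WM_O}$ that associates to every mutant $M$ of $\Pg$ a single label whose predicate is satisfied by a test data $\dt$ exactly when $\dt$ weakly kills $M$. Given $M$ obtained from $\Pg$ by an atomic mutation at location $\loc$, I construct a predicate $\pred_M$, to be evaluated in the pre-state of $\loc$, expressing when executing the instruction of $\Pg$ versus that of $M$ from a common pre-state yields distinct internal post-states. The labelling function is then $\labfun_{\WM_O}(\Pg) \mydef\ \{(\loc,\pred_M) \mid M \text{ produced from } \Pg \text{ by some } op \in O\}$, a finite set since $O$ and the set of locations of $\Pg$ are both finite.

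The explicit form of $\pred_M$ is given by a case analysis on what the mutation changes. If the operator replaces a branching condition $c$ by $c'$, the two instructions induce different control-flow transitions, hence distinct post-states, iff $(c \land \neg c') \lor (\neg c \land c')$ holds. If it replaces the right-hand side of an assignment $x \mathbin{:=} e$ by $e'$, the post-states agree everywhere except possibly on $x$, which diverges iff $e \neq e'$ in the pre-state. If it replaces the left-hand side, turning $x \mathbin{:=} e$ into $x' \mathbin{:=} e$, then by direct inspection the post-states differ on $x$ or $x'$ iff $(e \neq x) \lor (e \neq x')$ holds in the pre-state. In each case, since operators in $O$ neither affect nor introduce side-effect expressions and since $\Pg$ is normalised, all subterms involved are side-effect free and well-defined at $\loc$, so $(\loc, \pred_M)$ is a valid label.

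To finish, I establish the biconditional of Definition~\ref{def:simulation}. Because $\Pg$ and $M$ coincide at every location other than $\loc$, the executions $\Pg(\dt)$ and $M(\dt)$ agree up to the first visit of $\loc$ and share the same pre-state $S$ there. By construction of $\pred_M$, executing the two distinct instructions from $S$ produces distinct post-states iff $S \models \pred_M$. Hence $\dt$ weakly kills $M$ iff $\dt \covers_{\Pg} (\loc, \pred_M)$, which lifts to suites: $\TS \covers_{\Pg} \WM_O$ iff $\TS \covers_{\Pg} \LC_{\labfun_{\WM_O}}$.

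The main difficulty lies in the lhs case, where divergence can occur in two distinct memory cells rather than in the value of a single variable; the side-effect-free assumption is essential here because it ensures that the expression $e$ yields the same value in $\Pg$ and $M$, so that $\pred_M$ can be written purely as a condition on the common pre-state without duplicating any effect.
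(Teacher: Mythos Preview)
Your overall strategy coincides with the paper's: one label per mutant, with a predicate on the pre-state at the mutated location characterising post-state divergence, obtained by a case split on what the mutation touches. Your treatment of the condition case and the right-hand-side case is correct and matches the paper exactly.

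The left-hand-side case, however, has a genuine gap. Your predicate $(e \neq x) \lor (e \neq x')$ is sound only under the implicit assumption that $x$ and $x'$ denote distinct memory locations. In the paper's setting (C-like programs with pointers, arrays, and general l-values), the two left-hand sides may alias at run time: think of a mutation replacing \texttt{a[i] := e} by \texttt{a[j] := e}, or \texttt{*p := e} by \texttt{*q := e}. If $x$ and $x'$ refer to the same cell in the pre-state $S$, then both $\Pg$ and $M$ write the same value $e$ into the same location, so the post-states are identical and the mutant is \emph{not} weakly killed---yet your predicate reduces to $e \neq x$, which may very well hold in $S$. Hence the label can be covered without killing the mutant, breaking the required equivalence in Definition~\ref{def:simulation}. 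The paper repairs this by conjoining an explicit non-aliasing guard, taking $\pred_M \mydef\ \alpha(x) \neq \alpha(x') \wedge (x \neq e \vee x' \neq e)$, where $\alpha(\cdot)$ denotes the memory address of the l-value (e.g.\ the \texttt{\&} operator in C). Adding this conjunct is exactly the ``missing idea'' in your third case; once inserted, your argument goes through.
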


\begin{proof}
For simplicity, let us consider first a single mutation operator $op \in O$. 
The main idea is to introduce {\it one label for each mutant} created by $op$. The label  encodes the necessary and sufficient conditions to distinguish 
$M$ from $P$ {\it once the modified location has been reached}. This transformation is depicted in Figure~\ref{fig:from-wm-to-labels}.    Let us consider  a mutant $M$ differing from $P$ 
only at location $\loc$. We consider three cases, depending on  the modification introduced by $op$:  
%
\begin{itemize}
\item {\tt lhs := expr}   becomes   {\tt lhs := expr'}:  we add  label  $\lab \mydef\  (\loc, expr \neq expr')$. We must prove that $\dt \covers_{P} M$ iff $\dt \covers_{P} \lab$. 
 Note that   $\dt \covers_P \lab$ iff $\dt$  reaches $\loc$ with an internal state such that 
{\tt expr} and  {\tt expr'} evaluate to  different values. This is equivalent to say that  $P(\dt)$ and $M(\dt)$ are in different internal states right after $\loc$, which corresponds 
 by definition to   $\dt \covers_P M$.

\item {\tt if (cond) then}  becomes {\tt if (cond') then}: we add  label $\lab \mydef\ (\loc, cond \oplus cond')$, where $\oplus$ is the xor-operator. 
We follow  the same line of reasoning as in the previous case.  The $\oplus$ operator ensures that $P(\dt)$  and $M(\dt)$ will not follow the same branching condition.

\item {\tt lhs := expr}   becomes   {\tt lhs' := expr}:   we add label  $\lab \mydef\ (\loc, \alpha(lhs) \neq \alpha(lhs') \wedge ( lhs \neq expr  \vee  lhs' \neq expr ))$, where 
 $\alpha(x)$ denotes the memory location ($\approx$ address) of $x$, not its value. For example, in C the memory location is given by the {\tt \&} operator.     This case requires a little bit more explanation. In order to observe a difference between $\Pg(\dt)$ and  $M(\dt)$ right after the mutated location,  
we need first that  {\tt lhs'} and {\tt lhs} refer to different memory locations (which is not always obvious in the case of aliasing expressions).  
Moreover, there are only two ways of noticing a difference: either the old value of {\tt lhs} differs from {\tt expr}, then {\tt lhs} will evaluate to different values in $P(\dt)$ 
(equals to {\tt expr})    
and in  $M(\dt)$ (remains unchanged) just after the mutation, or the symmetric counterpart for {\tt lhs'}. This is exactly what $\lab$ encodes.   

\end{itemize}

\noindent By applying this technique for every mutant created by all considered mutation operators, we obtain the desired labelling function. 
\end{proof}

\begin{figure}[htbp]
\begin{center}
 \includegraphics[width=\columnwidth]{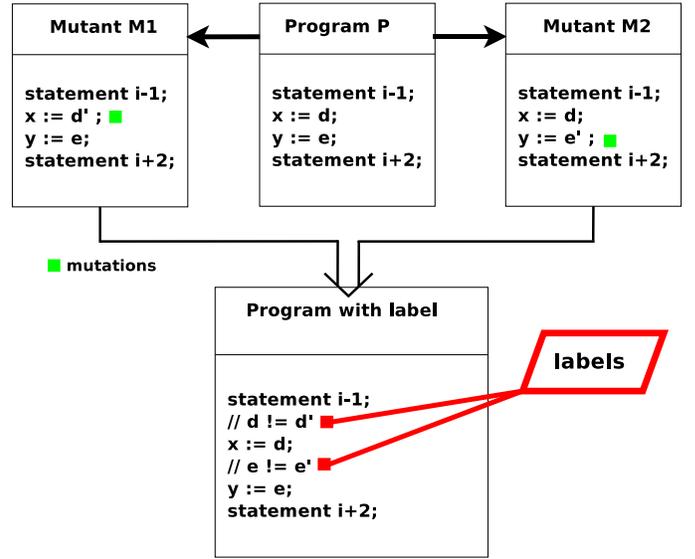}
\end{center}
\caption{Simulating weak mutants with labels}\label{fig:from-wm-to-labels}
\end{figure}

The subset of mutations we have been considering so far is limited to  
 (1) atomic mutations and (2) side-effect free operators. 
The first restriction is not a major issue as atomic mutations have been proved  to be almost as powerful as high-order mutations \cite{Offutt-92}. 
The second restriction has two sides: (2.a) it forbids mutation operators {\it introducing} side-effects, for example mapping {\tt x} to {\tt x++}, and (2.b) 
it forbids to mutate a side-effect prone expression.  
Again, restriction (2.a) is not severe: on side-effect free programs, the side-effect free fragment of  \WM\ encompasses the ABS, ROR, AOR, COR  and UOI operators \cite{AO-08}, which have been experimentally shown  
  mostly equivalent to much larger sets of operators~\cite{ORZ-93,WM-95}. 
It is left as an open question to quantify more precisely what is lost with restriction (2.b). 
Anyway, the previous points show that while side-effect free \WM\ is less powerful than full \WM, it is still a substantial subset.


%
%

\myparagraph{Other criteria.} 
Several other testing criteria commonly found in the literature can be emulated by labels. 
We focus here on Input Domain Coverage 
and Run-Time Error Coverage. 
\begin{itemize}
\item 
Input Domain Coverage:  assuming a partition  of the input domain $D$ of $\Pg$ given as disjoint predicates $\pred_1,\ldots,\pred_k$, 
this criterion consists in considering one $\dt_i$ for  each $\pred_i$.
 The corresponding labelling function 
adds all labels of the form  $(\loc_0,\pred_j)$, where $\loc_0$ is the entry point of $\Pg$. 
%
  The approach is independent of the way the partition is obtained, covering both interface-based  and functionality-based partitions \cite{AO-08}.

\item 
Run-Time Error Coverage: test objectives corresponding to  {\it run-time errors}  such as those implicitly searched for in active testing or assertion-based testing~\cite{CKGJ-10,GLM-08b,KA-96}   can be easily captured by labels. 
 These objectives include division by zero, out-of-bound array accesses or null-pointer 
dereference. Typically, any error-prone instruction at location $\loc$ with a precondition $\pred_{\text{safe}}$ will be tagged 
by a label $(\loc, \neg \pred_{\text{safe}})$.  
\end{itemize}




\myparagraph{Limits.} The following criteria cannot be emulated through labels, at least with simple encoding:  weak mutations with operators involving side-effects,  
criteria imposing constraints on paths rather than constraints on program locations (k-path coverage, data-flow criteria) and criteria relating  different paths (strong mutations, MCDC). 
It is left as future work to study whether these limitations are strict or not.

\section{Automating {LC}: a first attempt} \label{sec:naive}

Given an annotated program $\langle \Pg, \Lab  \rangle$, 
 we seek automatic methods: (1) to compute the \LC\ score of a given test suite $\TS$, and (2) to derive a test set achieving high \LC-coverage. 
We propose first  a black-box approach, reusing standard automatic testing tools through a {\it direct instrumentation} of $\Pg$. 
This technique underlies previous works aiming at extending DSE  coverage abilities~\cite{CKGJ-10,GLM-08b,KA-96,PMK-10}. 
While it allows for cheap \LC\ score computation, it is far from efficient for ATG, mainly   because of an exponential blow-up 
of the path space of the program.


     \subsection{Direct instrumentation}
  

The {\it direct instrumentation} $\Pg'$ for $\langle \Pg, \Lab \rangle$ 
consists in inserting for each label $\lab \mydef\ (\loc,\pred) \in \Lab$ a new branching instruction $I$: 
 {\tt if ($\pred$) \{\};}  such that    
%
all instructions leading to $\loc$ in $\Pg$  are connected   to $I$ in $\Pg'$, and $I$ is connected to $\loc$.  
The transformation is depicted in Figure~\ref{fig:typeI}. When different labels are attached to the same location, the new instructions are chained together in a sequence 
 ultimately leading to $\loc$.

\begin{figure}[htbp]

\begin{center} \small \tt 
\setlength{\columnseprule}{0cm}
\begin{tabular}{|c|c|c|}
\cline{1-1}\cline{3-3}

  && \\

\begin{lstlisting}
statement_1;
//label p 
statement_2; 
\end{lstlisting}
&$\longrightarrow$&
\begin{lstlisting}
statement_1;
if(p){};
statement_2;
\end{lstlisting}\\

  && \\ 

\cline{1-1}\cline{3-3}
\end{tabular}
\end{center}

\begin{center}
 \includegraphics[width = 0.8\columnwidth]{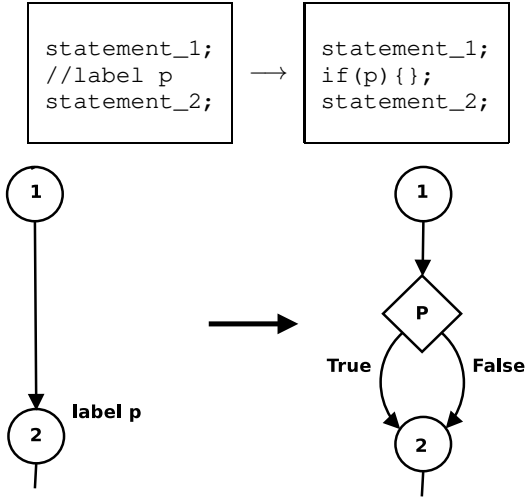}
\end{center}

\caption{Direct instrumentation $\Pg'$ }\label{fig:typeI}
\end{figure}

The direct instrumentation is sound with respect to \LC\  in the following sense. Let  us denote by {\bf NTD}  
the set of test objectives over $\Pg'$ requiring to cover all {\bf N}ew {\tt {\bf T}hen}-{\bf D}ecision introduced by the instrumentation. The following result holds.

          \begin{theorem}[Soundness]\label{thm:direct-sound} Given an annotated program $\langle \Pg, \Lab \rangle$, its instrumented version $\Pg'$ and a test suite $\TS$, we have:  
            $\TS \fullcovers_{\Pg} \LC$ iff  $\TS \fullcovers_{\Pg'} \text{\bf NTD}$. 
            \end{theorem}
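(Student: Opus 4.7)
The plan is to first establish a tight semantic correspondence between executions of $\Pg$ and $\Pg'$, then translate each coverage requirement into the other via this correspondence.

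The key semantic lemma I would prove is the following: for any test datum $\dt \in D$, the run $\Pg'(\dt)$ is obtained from $\Pg(\dt)$ by inserting, just before each visit of a labelled location $\loc$, one visit per label $\lab \mydef\ (\loc,\pred) \in \Lab$ to the corresponding instrumented branching instruction $I_\lab$, all these visits carrying the same internal state $S$ as the visit of $\loc$. This holds because (i) the inserted if-statements have empty bodies and therefore do not modify the state; (ii) by definition of labels the guards $\pred$ are side-effect free, so evaluating them does not modify the state either; and (iii) the chain of inserted instructions eventually passes control to $\loc$ regardless of the values of the guards, so control flow between uninstrumented locations is unchanged. As an immediate corollary, for every label $\lab \mydef\ (\loc,\pred)$ the then-branch of $I_\lab$ is taken along $\Pg'(\dt)$ if and only if $\Pg(\dt)$ visits some $(\loc,S)$ with $S$ satisfying $\pred$.

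Given this lemma, both directions of the equivalence are straightforward unfoldings of definitions. For the forward direction, assume $\TS \fullcovers_\Pg \LC$ and pick any test objective in \textbf{NTD}: by construction it is the then-branch of some $I_\lab$ with $\lab \mydef\ (\loc,\pred) \in \Lab$; by assumption some $\dt \in \TS$ covers $\lab$, so $\Pg(\dt)$ reaches $(\loc,S)$ with $S$ satisfying $\pred$, and by the lemma the then-branch of $I_\lab$ is exercised by $\Pg'(\dt)$. For the backward direction, assume $\TS \fullcovers_{\Pg'} \text{\bf NTD}$: for every label $\lab \mydef\ (\loc,\pred) \in \Lab$ some $\dt \in \TS$ takes the then-branch of $I_\lab$, which by the lemma means $\Pg(\dt)$ reaches $(\loc,S)$ with $S$ satisfying $\pred$, i.e.\ $\dt$ covers $\lab$.

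The main obstacle will be the careful formulation of the semantic correspondence when several labels share a location $\loc$: the inserted instructions are then chained and several NTDs correspond to the same original visit of $\loc$, so one has to check that each then-branch in the chain encodes its own predicate exactly, independently of the preceding ones. Because all inserted instructions act as no-ops on the state and their guards are side-effect free, this bookkeeping reduces to a routine induction on the length of the chain, after which the two directions of the theorem follow immediately.
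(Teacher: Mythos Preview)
The paper states Theorem~\ref{thm:direct-sound} without proof, so there is no argument to compare against. Your proposal is correct and is exactly the kind of argument the paper implicitly relies on: the side-effect freeness of label predicates together with the empty \texttt{then}-bodies guarantees that the inserted chain of branchings is a pure observer, so the run of $\Pg'(\dt)$ is the run of $\Pg(\dt)$ with extra read-only steps spliced in, and each new \texttt{then}-decision is taken precisely when the corresponding label predicate holds in the current state. Your handling of multiple labels at a single location via induction on the chain length is the right bookkeeping, and the two directions of the equivalence then follow by unfolding the definitions of $\covers_\Pg$ for \LC\ and of \textbf{NTD}.
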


This is interesting for both \LC\ score computation and ATG. 
Any ATG tool run on $\Pg'$ will  produce a test suite $\TS$ covering \LC\ for $\Pg$  as soon as $\TS$ covers all branches of interest in $\Pg'$.   
Concerning  score computation, a slightly modified version of the direct instrumentation, updating coverage information  in the new {\tt then}-branches,  
allows to compute \LC\ score efficiently. 

          \begin{theorem}\label{thm:coverage}   Given an annotated program $\langle \Pg, \Lab  \rangle$, its instrumented version $\Pg'$ and a test suite $\TS$, 
 then the \LC\ score of $\TS$ can be computed in time bounded by $|\TS| \cdot maxtime(\{\Pg'(\dt) | \dt \in \TS \})$.    
            \end{theorem}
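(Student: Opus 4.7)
The plan is to exhibit an explicit procedure that computes the $\LC$ score in the stated time bound, and then justify its correctness and complexity.

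First, I would define the \emph{score-tracking instrumentation} $\Pg''$, obtained from $\Pg'$ by replacing each inserted branch {\tt if ($\pred$) \{\};} associated with label $\lab \in \Lab$ by {\tt if ($\pred$) \{ cov[$\lab$] := true; \};}, where {\tt cov} is a fresh global boolean array of size $|\Lab|$ initialised to $\mathit{false}$. The scoring procedure then is: initialise {\tt cov}; for each $\dt \in \TS$, execute $\Pg''(\dt)$ without resetting {\tt cov} between runs; finally, return $|\{\lab \mid \text{{\tt cov}}[\lab] = \mathit{true}\}| / |\Lab|$ as the $\LC$ score.

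Next, I would argue correctness. Because the only change from $\Pg'$ to $\Pg''$ is the insertion of a single straight-line assignment in the {\tt then}-branch of each added conditional, $\Pg''$ and $\Pg'$ have identical control-flow graphs and identical path spaces, and the assignment {\tt cov[$\lab$]\ := true} is executed on a given run of $\Pg''(\dt)$ iff the corresponding new {\tt then}-branch is taken in $\Pg'(\dt)$. By Theorem~\ref{thm:direct-sound} this happens iff $\dt$ covers the label $\lab$ in $\Pg$. Since {\tt cov} accumulates across all runs, at the end {\tt cov}[$\lab$] $= \mathit{true}$ iff some $\dt \in \TS$ covers $\lab$ in $\Pg$, which is exactly the condition defining the $\LC$ score of $\TS$.

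Finally, I would establish the time bound. Initialisation of {\tt cov} and the final count each take $O(|\Lab|)$, which is a program-dependent constant independent of $\TS$. Each execution $\Pg''(\dt)$ differs from $\Pg'(\dt)$ only by at most one extra assignment per traversal of a newly inserted branch, and therefore runs in time $O(maxtime(\{\Pg'(\dt) \mid \dt \in \TS\}))$ up to a constant factor that can be absorbed. Summing over the $|\TS|$ runs yields the announced bound $|\TS|\cdot maxtime(\{\Pg'(\dt) \mid \dt \in \TS\})$. The only mildly delicate point is checking that attaching the tracking assignments does not change the path space or asymptotic runtime of $\Pg'$, which is immediate because the added code is purely straight-line and side-effect free on the original variables; once this is noted, the rest is bookkeeping.
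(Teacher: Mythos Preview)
Your proposal is correct and follows essentially the same approach as the paper, which merely sketches the idea in one sentence before the theorem (``a slightly modified version of the direct instrumentation, updating coverage information in the new {\tt then}-branches'') without giving a formal proof. Your write-up spells out the construction, the appeal to Theorem~\ref{thm:direct-sound} for correctness, and the per-run time accounting that the paper leaves implicit.
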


 Interestingly, computing \LC\ score can be done independently from $\card{\Lab}$. Regarding coverage score computation, \LC\ is much closer to  \DC\ (each test $\dt$ is executed only once) 
than it is to \WM\ (each test $\dt$ is executed {\it once per mutant}).  While efficient mutation score computation is a difficult issue in mutation testing, 
 Theorem~\ref{thm:coverage} together with Theorem~\ref{thm:wm-lc}  show that the side-effect free subset of  \WM\ supports efficient mutation score computation.  
%
%

\subsection{Drawbacks} \label{sec:drawbacks}

So far, the direct instrumentation seems to perfectly suit our needs. 
Unfortunately, it is significantly inefficient for ATG. There are two main reasons for that.  
\begin{itemize}
\item $\Pg'$ is too complex:  it exhibits much more behaviours than $\Pg$, most of them  being unduly complex  for covering the labels we are targeting.   
\item DSE  will naturally produce a $\TS$ covering several times the same labels, which is useless since each label needs to be covered only once. 

\end{itemize}

\noindent We formalize the first point hereafter. We consider two dimensions in which  $\Pg'$ is ``too complex'': the size of the search space,  denoted $\card{\Paths(\Pg')}$, 
and the shape  of paths in $\Paths(\Pg')$. 
Let us call {\it label constraints} all additional branches $\pred$ and $\neg \pred$ introduced in $\Pg'$ compared to $\Pg$, and let us 
denote by $m$ the maximal number of labels {\it per} location in $\Pg$.  
A single path $\sigma \in \Pg$  may correspond to up to $2^{m\cdot\card{\sigma}}$ paths in $\Pg'$, since  each label of $\Pg$ creates a branching in $\Pg'$ and at most $m$ such branchings can be found at each 
step of $\sigma$.  Note also  that the paths $\sigma' \in  \Pg'$ corresponding to $\sigma \in \Pg$  have length bounded by $m\cdot\card{\sigma}$. Therefore they can pass through up to  $m\cdot\card{\sigma}$ label constraints, 
   while (by definition)  $\sigma$  does not pass through any label constraint. 
%
%
%
%
%
Theorem \ref{thm:non-tight} summarises these results.

\begin{theorem}[Non-tightness] \label{thm:non-tight}
Given an annotated program $\langle \Pg, \Lab  \rangle$ and its instrumented version $\Pg'$, let us assume that $\Paths(\Pg)$ is bounded,  
 that $k$ represents the maximal 
length of paths in  $\Paths(\Pg)$ and that $m$ is the maximal number of labels {\it per} location in $\Pg$.   
Then $\Paths(\Pg')$ is more complex than  $\Paths(\Pg)$ in the following sense: 
\begin{itemize}

\item $\card{\Paths(\Pg')}$  can be exponentially larger than $\card{\Paths(\Pg)}$   
by a factor $2^{m \cdot k}$;

\item any $\sigma' \in \Paths(\Pg') $ may carry up to $m \cdot k$ (positive or negative) label constraints.   

\end{itemize}

\end{theorem}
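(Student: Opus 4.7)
The plan is to prove both bullets by exhibiting a single family of witness programs that simultaneously attain both upper bounds. The theorem is formulated as a ``can be \ldots / may carry up to'' statement, so constructing a worst case suffices and no tight combinatorial lower bound argument is needed over \emph{all} programs.

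First I fix arbitrary $k, m \ge 1$ and build a normalized program $\Pg$ as a single sequential block $\loc_1; \ldots; \loc_k$ with no branching instruction, so that $\card{\Paths(\Pg)} = 1$ and this unique path has length $k$. I attach $m$ labels $(\loc_i, \pred_{i,j})_{1\le j\le m}$ at each location, where each $\pred_{i,j}$ is taken over a fresh input variable (for instance $x_{i,j} = 0$). This ensures two things simultaneously: the labels are well-formed and side-effect free as required by Section~\ref{sec:labels}, and every Boolean combination of the $\pred_{i,j}$ is realisable by an input valuation, so no feasibility collapse can occur (this is only needed if one wants the blow-up to be witnessed by feasible paths; for the purely syntactic bound on $\card{\Paths(\Pg')}$ used elsewhere in the paper it is not strictly necessary).

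Next I unfold the direct instrumentation on this $\Pg$. By the rule depicted in Figure~\ref{fig:typeI}, each label $(\loc_i,\pred_{i,j})$ inserts an independent gadget {\tt if ($\pred_{i,j}$) \{\};} before $\loc_i$, and labels sharing a location are chained sequentially. Since the underlying $\Pg$ contributes exactly one control-flow path, any path of $\Pg'$ is uniquely determined by a choice of \texttt{then}/\texttt{else} at each of the $m \cdot k$ gadgets, which gives $\card{\Paths(\Pg')} = 2^{m \cdot k} = \card{\Paths(\Pg)} \cdot 2^{m \cdot k}$. This establishes the first bullet.

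For the second bullet, I observe on the very same witness that every syntactic path $\sigma' \in \Paths(\Pg')$ visits all $m\cdot k$ gadgets in sequence, picking up either the positive constraint $\pred_{i,j}$ or its negation $\neg \pred_{i,j}$ at each one. Hence $\sigma'$ carries exactly $m \cdot k$ label constraints, whereas the corresponding path in $\Pg$ carries none. No real obstacle arises in this proof; the only delicate point is to pick the predicates $\pred_{i,j}$ so that the example remains normalized and side-effect free, which is taken care of by using equalities on fresh input variables.
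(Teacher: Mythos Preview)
Your proposal is correct and follows the same core idea as the paper: each label inserted by the direct instrumentation adds an independent binary branching, so along a path of length $k$ with up to $m$ labels per location one gets $m\cdot k$ binary choices, hence up to $2^{m\cdot k}$ paths and up to $m\cdot k$ label constraints per path. The paper states this counting argument informally in the paragraph preceding the theorem and then says the theorem ``summarises these results''; you make the argument slightly more rigorous by explicitly exhibiting a witness family (straight-line $\Pg$, fresh-variable predicates) on which the bounds are actually attained, which is the right thing to do given the ``can be / may carry up to'' phrasing, but this is a presentational refinement rather than a different route.
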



Both aspects are problematic for a symbolic exploration of the search space: more paths means either more requests to a theorem prover (DSE) or a larger formula (BMC), and 
more constrained paths means more expensive requests.

\section{Efficient ATG for LC }  \label{sec:smart}

We describe in this section two main ingredients in order to obtain efficient ATG for \LC: 
(1) {\it a tight instrumentation}  avoiding all  drawbacks of the direct instrumentation, 
and (2)  a strong coupling of label coverage and DSE through {\it iterative label deletion}. 

\subsection{Tight instrumentation } \label{sec:tight-instrum}

Given a label $\lab \mydef\ (\loc,\pred)$, 
the key insights behind the tight instrumentation are the following: 
\begin{itemize}
  \item   label constraint $\pred$ is useful only for covering $\lab$, and should not  be propagated beyond that point;  

\item    label constraint $\neg \pred$ is pointless w.r.t.~covering $\lab$,  and should not be enforced in any way.  
\end{itemize}

Keeping  these lines in mind, 
the instrumentation works as depicted in Figure~\ref{fig:tight}: for each label $(\loc,\pred)$, we introduce a new instruction    
\ {\tt if (nondet) \{assert($\pred$); exit\};} 
where {\tt assert($\pred$)} requires $\pred$ to be verified, {\tt exit} forces the execution to stop and {\tt nondet} is a non-deterministic choice. 
In the resulting instrumented program $\Pg^{\star}$ (Figure~\ref{fig:tight}, right column), when an execution  reaches $\loc$, it gives rise to two execution paths: the first one tries to cover the label by 
asserting $\pred$ and {\it stops right there},   
the second  one simply follows its execution {\it as it would do in} $\Pg$, neither $\pred$ nor $\neg \pred$ being enforced. 

\begin{figure}[htbp]
\begin{center} \small  \tt 
\setlength{\columnseprule}{0cm}
\begin{tabular}{|c|c|c|}
\cline{1-1}\cline{3-3}
  && \\ 
  
\begin{lstlisting}
statement_1;
// label p
statement_2; 
\end{lstlisting}
&$\longrightarrow$&
\begin{lstlisting}
statement_1;
if(nondet){
    assert(p);
    exit(0);
};
statement_2;
\end{lstlisting}\\
  && \\ 
   
\cline{1-1}\cline{3-3}
\end{tabular}
\end{center}

\begin{center}
 \includegraphics[width = 0.8\columnwidth]{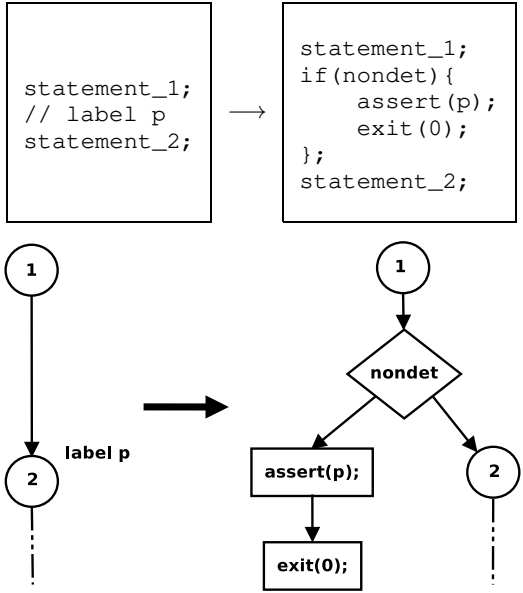} 
\end{center}
\caption{Tight instrumentation $\Pg^{\star}$} \label{fig:tight}
\end{figure}

The tight instrumentation $\Pg^{\star}$ is   sound w.r.t.~\LC. Let us denote by {\bf NA}  
the test objective over $\Pg^{\star}$ requiring to cover all {\bf N}ew {\tt {\bf A}ssert} introduced by the instrumentation (with condition evaluating to true). The following result holds.

          \begin{theorem}[Soundness]\label{thm:tight-sound} Given an annotated program $\langle \Pg, \Lab \rangle$, its tight instrumentation $\Pg^{\star}$ and a test suite $\TS$, we have:   
              $\TS \fullcovers_{\Pg} \LC$ iff  $\TS \fullcovers_{\Pg^{\star}}  \text{\bf NA} $. 
            \end{theorem}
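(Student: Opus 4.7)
The plan is to prove both directions by exhibiting a direct trace correspondence between $\Pg$ and $\Pg^{\star}$ under the natural convention that $\dt \fullcovers_{\Pg^{\star}} \text{\bf c}$ holds whenever \emph{some} resolution of the \texttt{nondet} choices makes the execution of $\dt$ cover \text{\bf c}. The pivotal structural observation is that each inserted block \texttt{if(nondet)\{assert(p);exit(0)\}} alters the control-flow in only two ways: selecting \texttt{nondet}=\texttt{true} terminates the execution right after the assertion, while \texttt{nondet}=\texttt{false} acts as a skip and hands control to the original statement at $\loc$ with an unchanged state. Consequently, the ``all-false-\texttt{nondet}'' runs of $\Pg^{\star}(\dt)$ are in one-to-one, state-preserving correspondence with the runs of $\Pg(\dt)$; this bijection, established by a straightforward induction on trace length, is the workhorse of both directions.

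For the forward direction, I assume $\TS \fullcovers_{\Pg} \LC$ and fix a label $\lab \mydef (\loc,\pred) \in \Lab$ together with a witness $\dt \in \TS$ whose run reaches some $(\loc,S)$ with $S$ satisfying $\pred$. Using the correspondence, the all-false-\texttt{nondet} execution of $\Pg^{\star}(\dt)$ reaches the guard inserted for $\lab$ in the same state $S$. Flipping that last \texttt{nondet} to \texttt{true} keeps the same prefix and then evaluates \texttt{assert($\pred$)} in state $S$, which holds because $S \models \pred$; the corresponding New Assert is thus covered. Doing this label by label establishes $\TS \fullcovers_{\Pg^{\star}} \text{\bf NA}$.

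For the backward direction, I assume $\TS \fullcovers_{\Pg^{\star}} \text{\bf NA}$ and fix, for each New Assert originating from $\lab \mydef (\loc,\pred)$, a test datum $\dt \in \TS$ and an execution of $\Pg^{\star}(\dt)$ that covers it. Because every ``true'' branch of a \texttt{nondet} is immediately followed by \texttt{exit}, every earlier \texttt{nondet} encountered along this run must have been resolved to \texttt{false}; only the final one, guarding $\lab$, is resolved to \texttt{true}, and the covered assertion forces $S \models \pred$ at state $S$ just before it. Deleting the no-op false branches and the terminating true tail yields an execution of $\Pg(\dt)$ reaching $(\loc,S)$ by the same correspondence, hence $\dt \fullcovers_{\Pg} \lab$. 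Since this holds for every label, $\TS \fullcovers_{\Pg} \LC$.

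The main obstacle is not conceptual but notational: one must first pin down the semantics of \texttt{nondet} and carefully formulate the ``all-false'' trace bijection between $\Pg^{\star}$ and $\Pg$, including the fact that normalised programs guarantee that the inserted block cannot interfere with side effects in the original statement. Once this bijection is stated precisely, both implications reduce to elementary bookkeeping on the \texttt{nondet} resolutions along a single trace.
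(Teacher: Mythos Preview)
The paper does not supply a proof of this theorem; it is stated and then the text immediately moves on to the tightness discussion. Your argument is correct and is exactly the natural one: the ``all-false-\texttt{nondet}'' runs of $\Pg^{\star}$ are in state-preserving bijection with the runs of $\Pg$ (because the inserted blocks are no-ops on the \texttt{false} branch and label predicates are side-effect free), and any run covering a new \texttt{assert} necessarily has all earlier \texttt{nondet}s resolved to \texttt{false} since each \texttt{true} branch exits immediately. Both implications then follow by the bookkeeping you describe.

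One small remark: you attribute the non-interference of the inserted block to program \emph{normalisation}, but that is not quite the right hypothesis. Normalisation (Section~III.A in the paper) concerns side effects in branching conditions of the \emph{original} program. What you actually need here is that the label predicate $\pred$ itself is side-effect free, which is guaranteed directly by the definition of a label. This is what ensures that evaluating \texttt{assert($\pred$)} (and, on the \texttt{false} branch, skipping it) leaves the internal state untouched, so that the trace bijection is indeed state-preserving. The conclusion is unaffected, but the justification should point to the side-effect-free clause in the definition of labels rather than to normalisation.
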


Interestingly, the tight instrumentation does not show any of the issues reported in Theorem \ref{thm:non-tight}. The underlying reasons have been sketched at the beginning 
of Section \ref{sec:tight-instrum}  and are depicted in Figure~\ref{fig:comparison}. 
A single execution path in $\Pg$ going through $n$ labels 
 can give birth up to $2^{n}$ paths in $\Pg'$ (left column), while it can create only 
  $n+1$ paths in  $\Pg^{\star}$ (right column). 
Moreover, each path in $\Pg^{\star}$ can go through  at most one single positive label constraint, while a path $\sigma'$ in $\Pg'$ can carry up to 
$\card{\sigma'}$  (positive or negative) label constraints.  
These results are summarized in Theorem~\ref{thm:tight-tight}. 

\begin{figure}[htbp]
\begin{center}
  \includegraphics[width = \columnwidth]{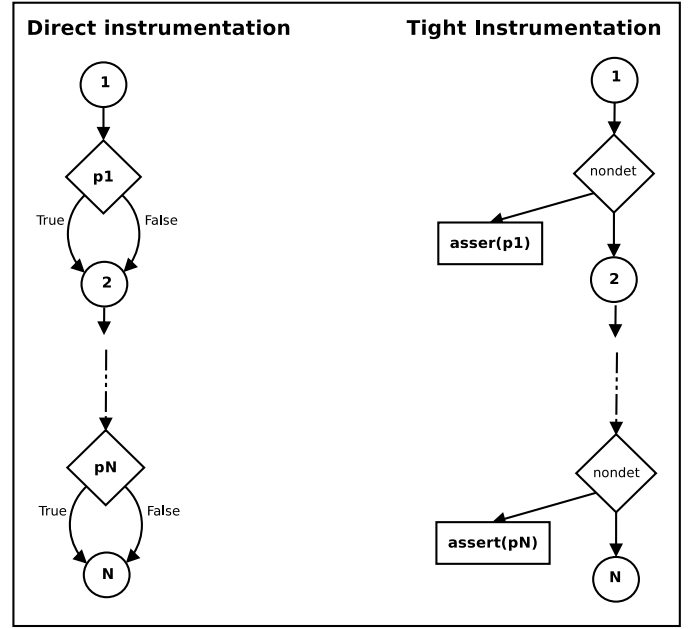}
\end{center}
\caption{Direct vs.~tight instrumentation}\label{fig:comparison}
\end{figure}

\begin{theorem}[Tightness]\label{thm:tight-tight}
Given an annotated program $\langle \Pg, \Lab  \rangle$ and its instrumented version $\Pg^{\star}$,  let us assume that $\Paths(\Pg)$ is bounded, that $k$ represents the maximal 
length of paths in  $\Paths(\Pg)$ and that $m$ is the maximal number of labels {\it per} location in $\Pg$.   
Then $\Pg^{\star}$ is tight in the following sense:  
\begin{itemize}

\item $\card{\Paths(\Pg^{\star})}$  is linear in $\card{\Paths(\Pg)}$ and $m\cdot k$;

\item 
any $\sigma \in \Paths(\Pg^{\star}) $ carries at most one label-constraint. 
\end{itemize}
\end{theorem}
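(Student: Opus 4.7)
The plan is to prove both clauses by carefully describing how a single path $\sigma \in \Paths(\Pg)$ is ``lifted'' into paths of $\Pg^{\star}$, and then summing over $\Pg$.

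First I would formalize the local effect of the tight instrumentation at a single label. At a location $\loc$ carrying $m_\loc \leq m$ labels $(\loc,\pred_1),\ldots,(\loc,\pred_{m_\loc})$, the inserted blocks {\tt if(nondet)\{assert($\pred_j$); exit(0);\}} are chained sequentially in front of $\loc$. So when a run of $\Pg^{\star}$ arrives at this chain, there are exactly $m_\loc+1$ possibilities: either (i) for some $j$ the $j$-th {\tt nondet} is taken true, the assertion on $\pred_j$ is evaluated and the execution terminates with {\tt exit(0)}, or (ii) all $m_\loc$ non-deterministic choices are taken false and execution continues exactly as in $\Pg$ from $\loc$. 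In particular, case (ii) does not introduce any constraint on $\pred_j$, and case (i) introduces the single constraint $\pred_j$ before stopping.

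Next I would establish a path-level correspondence. Given $\sigma = (\loc_1,S_1)\cdots (\loc_n,S_n)\in \Paths(\Pg)$ with $n\leq k$, the paths of $\Pg^{\star}$ projecting onto $\sigma$ are exactly: (a) one ``through'' path obtained by choosing all {\tt nondet} to be false at every $\loc_i$; (b) at most $\sum_{i=1}^{n} m_{\loc_i} \leq m\cdot k$ ``exit'' paths, each one following $\sigma$ up to some position $i$ and then taking {\tt nondet} true for exactly one of the $m_{\loc_i}$ labels at $\loc_i$, terminating with {\tt exit(0)}. Every path of $\Pg^{\star}$ arises this way for a unique $\sigma$, so
\[
\card{\Paths(\Pg^{\star})} \;\leq\; \card{\Paths(\Pg)}\cdot (1 + m\cdot k),
\]
which is the desired linear bound in $\card{\Paths(\Pg)}$ and $m\cdot k$.

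For the second clause, I would observe from the local analysis that each label constraint $\pred_j$ appears in $\Pg^{\star}$ only inside a single {\tt assert} that is immediately followed by {\tt exit(0)}. Thus every execution path of $\Pg^{\star}$ falls into exactly one of two categories: either it never triggers any {\tt if(nondet)} body and therefore carries no label constraint at all, or it triggers exactly one such body and stops there with a single positive label constraint $\pred_j$. The symmetric negative constraints $\neg\pred_j$ do not occur anywhere in $\Pg^{\star}$, in contrast to the direct instrumentation. The only minor subtlety, and the main place I would want to be careful, is to justify that the non-deterministic choices themselves should not be counted as label constraints: they are auxiliary symbolic variables independent of $\Pg$'s state, and they never involve the user-supplied predicates $\pred$, so they do not contribute to the count relevant for Theorem~\ref{thm:non-tight}'s notion of ``label constraint'' carried by a path.
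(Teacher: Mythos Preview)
Your proposal is correct and follows essentially the same approach as the paper, which also argues that a path in $\Pg$ passing through $n\le m\cdot k$ labels lifts to at most $n+1$ paths in $\Pg^{\star}$ and arrives at the same explicit bound $\card{\Paths(\Pg^{\star})}\le (m\cdot k+1)\cdot\card{\Paths(\Pg)}$. One small imprecision: the claim that every path of $\Pg^{\star}$ arises from a \emph{unique} $\sigma\in\Paths(\Pg)$ is not quite right (an early-exit path corresponds only to a prefix, shared by possibly many $\sigma$), but since you only need the inequality this does not affect the argument.
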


\begin{proof}
The main reasons behind this result directly follow from the tight instrumentation, and have already been exposed just before Theorem~\ref{thm:tight-tight}. 
We can be more precise:    
$\card{\Paths(\Pg^{\star})}$    is bounded by $(m\cdot k +1)\cdot \card{\Paths(\Pg)}$.  
\end{proof}

Theorem~\ref{thm:tight-tight} implies that  
any path-based method conducted over $\Pg^{\star}$ will have a much easier task 
than  over $\Pg'$, since  $\Pg^{\star}$ contains exponentially less paths  and those paths are simpler. This is independent of the underlying verification technique 
 as long as it enumerates paths in some way, such as DSE or BMC.

        \subsection{Iterative label deletion} \label{sec:deletion}

We focus now on the last issue pointed out in Section~\ref{sec:drawbacks}. 
Besides the dramatic complexity of the search space induced by $\Pg'$, 
which is settled by $\Pg^{\star}$, the remaining problem is that 
a DSE procedure launched on $\Pg^{\star}$ will try to cover all paths 
from $\Pg^{\star}$, while we are only interested in covering branches corresponding 
 to labels.  Especially, a standard DSE may try to cover many path prefixes 
ending in an already-covered {\tt assert($\pred$)}.  Whether they fail or not, these computations will be redundant 
since the \LC-soundness result of Theorem~\ref{thm:tight-sound} requires  that each new {\tt assert} be covered only once.

\medskip 

{\it Iterative label deletion} (IDL) consists in (conceptually) erasing a label constraint   as soon as it is covered, so that it  will not affect 
the subsequent path search.  
IDL requires to modify  SE/DSE  in the following way:  each label $\lab$ is equipped with a boolean variable $b_{\lab}$ set to true iff $\lab$  
has already been covered during  path exploration,  and attempts to symbolically execute paths leading to $\lab$ continue as long as $b_{\lab}$ is false. 

We present  DSE with IDL over annotated programs in Algorithm~\ref{algo:se-idl}, where modifications w.r.t.~standard SE/DSE are pointed out by $(\star)$ marks.   
We assume that  $\Paths(\langle \Pg, \Lab \rangle)$ is constructed in the following way: at each step, a run encountering a label $\lab \mydef\ (\loc,\pred)$ can either  
 choose to go through $\lab$ (enforcing $\pred$) and continue,  or bypass $\lab$ (no constraint) and continue. 
%
 The adaptation to $\Pg^{\star}$ is described after.

\begin{algorithm}
\KwIn{an annotated program $\langle \Pg, \Lab \rangle$ with finite set of paths  $\Paths(\Pg)$ }
\KwOut{$\TS$, a set of pairs $(\dt,\sigma)$ such that $\Pg(\dt) \covers_{\Pg} \sigma$ }

 $\TS$ := $\emptyset$\; 
 $S_{paths}$ :=   $\Paths(\langle \Pg, \Lab \rangle)$\;

  \While{$S_{paths} \neq \emptyset$}{
     choose $\sigma \in S_{paths}$; $S_{paths}$ := $S_{paths} \backslash \{\sigma$\}  \;

     compute $\pred_{\sigma}$\; 


 \Switch{solve($\pred_{\sigma}$)}{

 \uCase{sat($\dt$): }{  

   $\TS$ := $\TS \cup  \{(\dt,\sigma)\}$\; 

   ($\star$) for all $\lab$ covered by $\sigma$, do: $b_{\lab}$ := $1$ \;    

   ($\star$) remove from  $S_{paths}$  all $\sigma'$ going through \\ \quad  a   label $\lab$  s.t.~$b_{\lab} = 1$ \;    

 }

 \lCase{unsat: }{ skip }

  }
}

\Return{$\TS$}\;

\caption{Symbolic Execution with IDL}\label{algo:se-idl}
\end{algorithm}

For  integration in a realistic SE/DSE setting with dynamic exploration of path space,  
we distinguish  two flavors of IDL:   
\begin{description}
\item[{\sc idl-1}] a label is marked as covered   only when  it belongs to a  path prefix  being successfully solved. 
 This is a purely symbolic approach.      

\item[{\sc idl-2}] a label is also marked   when it is covered by a  concrete execution, taking   
 advantage of dynamic runs   
to delete several labels at once.  
%
%
%
\end{description}

\myparagraph{Combining {\sc idl} with tight instrumentation.} 
Both variants of {\sc idl} can be combined with tight instrumentation either in a dedicated manner  or in a black-box setting.  Since dedicated implementations are straightforward,  
 we focus hereafter on  black-box implementations.  
An instrumentation enforcing {\sc idl-1} over $\Pg^{\star}$ is depicted in Figure~\ref{fig:typeII}. 
We follow the idea of adding extra boolean variables  for coverage, denoted {\tt b\_l} where {\tt l} is a label identifier. However it is mandatory that the coverage information be global to the whole path search process  
and not bound to a single execution. 
It can be  achieved by putting the coverage information in an external file, accessed and modified through operations  {\tt read(b\_l)}  and 
{\tt set\_covered(b\_l)}. 

\begin{figure}[htbp]
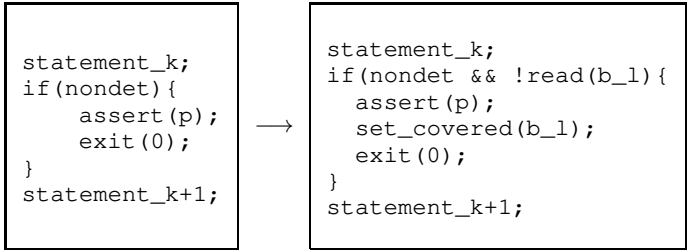
 
\small \tt 
\begin{center}
\setlength{\columnseprule}{0cm}
\begin{tabular}{|c|c|c|}
\cline{1-1}\cline{3-3}
 
  && \\ 
\begin{lstlisting}
statement_k;
if(nondet){
    assert(p);
    exit(0);
}
statement_k+1;
\end{lstlisting}
&$\longrightarrow$&
\begin{lstlisting}
statement_k;
if(nondet && !read(b_l){
  assert(p); 
  set_covered(b_l);
  exit(0);
}
statement_k+1;
\end{lstlisting}\\
  && \\ 
  
\cline{1-1}\cline{3-3}
\end{tabular}
\end{center}
\caption{{\sc idl-1} variant of tight instrumentation $\Pg^{\star}$}\label{fig:typeII}
\end{figure}

Enforcing {\sc idl-2} in a black-box setting requires a fine-grained control over the DSE procedure. We need to be able to 
query the DSE engine for the next generated test data. The procedure reuses  the {\sc idl-1} 
approach for path space exploration, but  each new generated test data  is also run 
on the {\it direct instrumentation} $\Pg'$. Then  all covered labels are marked in the  coverage file of  {\sc idl-1}  before 
the next test data is searched for. 
 The technique is depicted in Figure~\ref{fig:update2}, where TD stands for ``test data''.

\begin{figure}[htbp]
\begin{center}
 \includegraphics[width=\columnwidth]{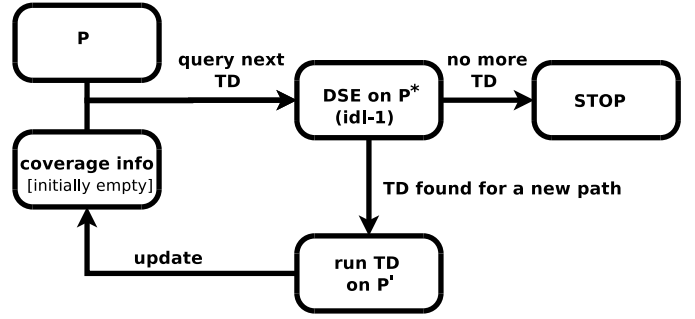}
\end{center}
\caption{{\sc idl-2} variant for DSE }\label{fig:update2}
\end{figure}

We denote by DSE$^{\star}$ the DSE procedure enhanced with {\sc idl-1} or {\sc idl-2}, and we consider 
only deterministic  and sound DSE techniques. The following result holds. 

\begin{theorem} 
Given an annotated program $\langle \Pg, \Lab \rangle$ and its tight instrumentation $\Pg^{\star}$, then
DSE$^{\star}$($\Pg^{\star}$) covers as many labels as DSE($\Pg^{\star}$) does. 
\end{theorem}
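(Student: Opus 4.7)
The plan is to show that the set of labels covered by DSE$^{\star}$($\Pg^{\star}$) equals the set covered by DSE($\Pg^{\star}$). One inclusion is immediate: since DSE$^{\star}$ only prunes paths from the search space that DSE would otherwise consider and uses the same underlying symbolic machinery, every label it covers is a fortiori coverable by DSE. The interesting direction is that the pruning in \textsc{idl-1} / \textsc{idl-2} never loses a coverable label.

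First, I would fix an arbitrary label $\lab$ covered by DSE and trace how it was covered: there is some path $\sigma \in \Paths(\Pg^{\star})$ whose path predicate $\pred_{\sigma}$ is satisfiable and whose associated test datum $\dt$ reaches the location of $\lab$ with $\pred$ holding. By the structure of the tight instrumentation (Figure~\ref{fig:tight}), the only way for $\Pg^{\star}$ to enforce the combination ``location reached $\wedge$ predicate satisfied'' along a path is via the \texttt{nondet}-branch carrying \texttt{assert($\pred$)} followed by \texttt{exit}, so $\sigma$ must pass through this branch at $\lab$. By the second bullet of Theorem~\ref{thm:tight-tight}, $\sigma$ then carries exactly one label-constraint, namely the one attached to $\lab$ itself.

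Next, I would compare the executions of DSE and DSE$^{\star}$, both assumed deterministic and sound, sharing the same path-enumeration strategy over $\Paths(\Pg^{\star})$. Either DSE$^{\star}$ reaches $\sigma$, in which case the same satisfiability check yields a test covering $\lab$, or DSE$^{\star}$ has discarded $\sigma$ through the pruning step of Algorithm~\ref{algo:se-idl}. The pruning rule removes $\sigma$ only if some label on $\sigma$ is already flagged, and by the previous step the only candidate is $\lab$ itself. But the flag $b_{\lab}$ is set only after some test already added to $\TS^{\star}$ has covered $\lab$---either via symbolic solving in \textsc{idl-1}, or additionally via an incidental concrete execution in \textsc{idl-2}. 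Either way, $\lab$ is covered by DSE$^{\star}$.

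The main obstacle I expect is making rigorous the implicit assumption that DSE and DSE$^{\star}$ explore $\Paths(\Pg^{\star})$ ``in the same order modulo pruning'': without this, DSE might cover $\lab$ via a specific $\sigma$ that DSE$^{\star}$ never enumerates because the deletion of earlier paths has shifted its exploration elsewhere. Cleanly stating the determinism hypothesis---for instance, that the path-selection function depends only on the currently available path set and the already-generated tests---is what lets one do the step-by-step simulation above. Once that is in place, the decisive structural ingredient is Theorem~\ref{thm:tight-tight}: because each path carries \emph{at most one} label-constraint, pruning a path after covering its label can never ``sacrifice'' the coverage of a second label piggybacked on the same path, which is precisely what would otherwise break the argument.
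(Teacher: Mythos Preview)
Your approach is sound in outline but differs from the paper's and contains one small slip. The paper does not invoke Theorem~\ref{thm:tight-tight}; instead it argues via three structural facts: (i) a label is discarded only once some generated test already covers it, (ii) labels are pure observers in $\langle \Pg, \Lab\rangle$, so the original $\Pg$-paths already suffice to reach any coverable label, and (iii) $\Paths(\Pg^{\star})$ decomposes into the original $\Pg$-paths (carrying no label constraint) plus the label-constrained paths, and IDL only ever removes the latter. Hence the $\Pg$-paths---and with them the ability to branch off into any not-yet-covered label's \texttt{assert}---survive throughout DSE$^{\star}$. Your route via Theorem~\ref{thm:tight-tight} reaches the same conclusion more operationally: since each $\Pg^{\star}$-path carries at most one label constraint, pruning it can only be triggered by coverage of that very label. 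Both arguments exploit the same structural property of $\Pg^{\star}$; the paper's framing makes it more transparent why the analogous statement \emph{fails} for the direct instrumentation $\Pg'$ (fact~(iii) is false there, as the paper explicitly remarks).

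The slip is in your first step: you claim that the path $\sigma$ by which DSE covers $\lab$ ``must pass through'' the \texttt{nondet}-branch of $\lab$. That is not so---DSE may cover $\lab$ incidentally, via a test generated for a completely unrelated path (a pure $\Pg$-path, or the \texttt{assert}-branch of some other label) whose solver solution happens to satisfy $\pred$ at $\loc$. The fix is easy: rather than tracking the actual $\sigma$ that DSE used, argue that since $\lab$ is coverable there \emph{exists} a feasible path $\sigma_{\lab}$ ending in \texttt{assert($\pred$); exit}, and apply your pruning argument to that canonical witness. Finally, your worry about determinism and exploration order is largely unnecessary here: both procedures exhaust their (finite) path sets, so any unpruned path is eventually explored regardless of order, and no step-by-step simulation is required.
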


\begin{proof}
The result comes from three facts. First, a label is discarded iff it is covered by an already generated test data $\dt \in \TS$. 
Second, labels act only as ``observers'' in $\langle \Pg, \Lab \rangle$:  
they do not impact the execution, so they cannot enable or prevent the coverage of a particular test objective. It implies that paths from $\Pg$ are sufficient for reaching coverable labels ({\sc Fact 2}).   
Third, by construction, the set of paths from   
  $\Pg^{\star}$ contains all paths from $\Pg$ (those without any label constraint) plus additional paths with label constraints ({\sc Fact 3}). Therefore, deleting a label constraint in DSE$^{\star}$($\Pg^{\star}$) cannot discard any of the original path from $\Pg$. Using {\sc Fact 2}, we deduce that label deletion  cannot make uncoverable an otherwise-coverable label.    Note that   
the proof (and the theorem) does not hold for the direct  instrumentation $\Pg'$ since    {\sc Fact 3} is false in that case.   
\end{proof}

\section{Implementation \& Experiments}  \label{sec:experiments}

\subsection{Implementation}

We have implemented tight instrumentation and iterative label deletion 
inside  \pathcrawler~\cite{WMM-04a}. The tool follows a standard  DSE approach and  targets safety-critical C programs, with  a strong focus  on relative completeness guarantees.   
For example, the underlying constraint solver deals precisely with modular arithmetic, bitwise operations,  floats and multi-level pointer dereferences.   
The DSE engine relies on a simple DFS path search heuristics. On the other hand, the tool is highly optimised for programs 
with many infeasible paths. Aside from early detection of infeasible paths which is ensured by the ``dynamic execution-driven'' nature of DSE,   
  optimisations 
include several levels of incremental solving for cheap detection of infeasible paths. 

Our implementation  follows  the description of Section~\ref{sec:smart}. 
We adopt a grey-box approach rather than a full black-box approach because \pathcrawler\ does not offer yet the required API for {\sc idl-2} 
and does not support non-deterministic choice. 
We add a {\tt pathcrawler\_label(bool)} instruction, those native treatment implements tight instrumentation and {\sc idl-2}.  
The current search heuristics is mostly depth-first, but labels are handled as soon as possible.  











\subsection{Experiments}

Preliminary experiments have been conducted in order to test the following properties:  
%
$(i)$ the relative gain of our two optimisations w.r.t.~direct instrumentation, 
$(ii)$ the overhead of leveraging DSE  to \LC.   
%
%
Evaluating the practical feasibility of label-based DSE  over large programs 
or its  bug-finding power 
are left as future work.  
Note  that the bug-finding power of the coverage criteria emulated through labels in Section \ref{sec:labels}  
has already been extensively studied in the literature.

\myparagraph{Protocol.} We consider a few standard benchmark programs  taken from related works~\cite{CKGJ-10,PMK-10,PM-10}, together with   
three types of labels  simulating standard coverage criteria of increasing difficulty: \CC, \MCC\ and \WM. 
For \WM, our labels  mimic  mutations typically introduced    
by   MuJava~\cite{MOK-06}  for operators AOIU, AOR, COR and  ROR \cite{AO-08}.  
We compare the following algorithms: DSE($\Pg$) denotes the standard DSE on the standard program (witness),  
 DSE($\Pg'$) denotes  standard DSE on  direct instrumentation,  DSE($\Pg^{\star}$) denotes  standard DSE on  tight instrumentation and 
DSE$^{\star}$($\Pg^{\star}$) denotes DSE with iterative label deletion run on  tight instrumentation.   
Experiments are performed on a standard laptop (Intel Core2 Duo 2.40GHz, 4GB of RAM). Time out for solver is set to 1 min.

We record the following information: number of paths explored by the search, computation time and achieved coverage. 
The number of paths is a good measure for comparing the complexity of the different search spaces, and therefore to assess both the ``cost'' 
of leveraging DSE to labels and the benefits of our optimisations. 
%
%
Coverage score together with computation time  indicate how practical label-based DSE is.


It must be highlighted that \pathcrawler\ does not stop until all feasible paths are explored. This strategy gives us a good estimation of the size of the path space, 
however in pratice it would be wiser to implement a label-based stopping criteria.  Hence, from a feasibility point of view, results reported here are too pessimistic.


\begin{table}[htb]
\begin{center}
{\scriptsize
\begin{tabular}{|c|l|c|c|c|c|}
\cline{3-6}
\multicolumn{2}{c|}{} & \bigstrut DSE($P$) & DSE($P'$) & DSE($P^{\star}$) &  DSE$^{\star}$($P^{\star}$)\\
\multicolumn{2}{c|}{}                & \bigstrut (witness)  &  &   & \\

\hline
Trityp-cc        & \#p &    35    &  183  & 83   & 46   \\

  50 loc         & time  &  1.3s    & 1.6s   &  2s  &  4.5s  \\

  24 $\lab$     &  cover  &        &  24/24  & 24/24    & 24/24   \\

\hline
Trityp-mcc        & \#p &    35    &  337  & 110   & 66   \\

  50 loc         & time  &  1.3s    & 1.9s   &  3s  &  2.1s  \\

  28 $\lab$     &  cover  &        &  28/28  & 28/28    & 28/28   \\

\hline  
Trityp-wm        & \#p &    35    &    x  & 506   & 48   \\

  50 loc         & time  &  1.3s    &  x  &  12s  &  5.1s  \\

  129 $\lab$     &  cover  &         &  x  & 120/129    & 120/129   \\

\hline
4Balls-wm            & \#p       & 7     & \bf 195   & 75   &  \bf 23  \\
  35 loc              & time     & 1.2s    & 1.9s   & 2.1s   & 2.1s   \\

  67 $\lab$         &  cover     &      & 56/67   &  56/67   &  56/67  \\

\hline
utf8-3-wm          &  \#p     & 134   & 1,379   &  626   &  313  \\
 108 loc           & time     & 1.4s   &  4.2s  &  4.3s   &  3.8s  \\

84 $\lab$          & cover    &      &  55/84  &  55/84  &  55/84  \\

\hline
utf8-5-wm          &  \#p     & 680   & \bf 11,111   &  3,239  & \bf 743  \\
 108 loc           & time     &  2s  &  40s  &  24s  &  8.1s  \\

84 $\lab$          & cover    &      &  82/84  &  82/84  &  82/84  \\

\hline
utf8-7-wm          &  \#p     & 3,069   & \bf 81,133   & 14,676    & \bf 3,265  \\
 108 loc           & time     &  5.8s   & \bf 576s     &  110s     & \bf 35s   \\

84 $\lab$          & cover    &      &  82/84  &  82/84  &  82/84  \\

\hline
 Tcas-cc           &  \#p     & 2,787     & 3,508   & 3,508   &  2,815  \\
124  loc           & time     &  2.9s    &  3.6s   &  5s  &  3.4s  \\

10 $\lab$          & cover    &      & 10/10   & 10/10   & 10/10   \\

\hline
Tcas-mcc           &  \#p     & 2,787    & 3,988   & 3,988   &  3,059  \\
124  loc           & time     &  2.9s   & 4.2s    & 5.2s   & 3.9s   \\

12 $\lab$          & cover    &         & 11/12    & 11/12   &  11/12  \\

\hline
Tcas'-wm        &  \#p     & 4,420    & \bf 300,213   & 20,312   & \bf 6,014   \\
124  loc        & time     &  5.6s   & \bf 662s   & 120s   & \bf 27s   \\

111 $\lab$      & cover    &      &  101/111  & 101/111   & 101/111   \\


\hline
Replace-wm          &  \#p     & 866    & \bf 87,498  & 6,420   & \bf 2,347   \\
  100 loc           & time     & 2s     & \bf  245s   &  64s  & \bf  14s \\

 79 $\lab$          & cover    &      & 70/79   & 70/79   & 70/79   \\

\hline
\end{tabular}
}


\end{center}

{\small x:  crash due to a bug in the underlying solver}

\caption{Experimental results for ATG}  \label{tab:results:atdg2}
\end{table}

\myparagraph{Results.} Results are summarized in Table~\ref{tab:results:atdg2}. 
We can observe the following facts. 
First, the number of explored paths is always  much greater in DSE($\Pg'$) than in DSE$^{\star}$($\Pg^{\star}$), with a factor between 4x and 50x on all examples but  {\tt Tcas-cc} and {\tt Tcas-mcc}, where the difference is less than 1.5x.  
Actually, labels in those two programs  lead mostly to infeasible paths, yielding no  path explosion.  
Things are more mitigated for computation time, certainly because of \pathcrawler\ optimisations. Yet, DSE($\Pg'$) and DSE$^{\star}$($\Pg^{\star}$) are at worst mostly equivalent on the smallest 
examples (except for  {\tt Trityp-cc}),  
  and DSE$^{\star}$($\Pg^{\star}$) can   be up to 25x faster on the most demanding programs.  
Second, as expected, DSE($\Pg^{\star}$) stands   between DSE($\Pg'$) and DSE$^{\star}$($\Pg^{\star}$) for the number of paths, and for computation time on the most demanding examples.  
Finally, compared with  DSE($\Pg$), DSE$^{\star}$($\Pg^{\star}$) yields at most a 3x growth of the search space and  an overhead between 1.1x and 7x for computation time.




\myparagraph{Conclusion.} These experiments  confirm our formal predictions: 
\begin{itemize}

\item our fully-optimised DSE performs significantly better on difficult programs  than the direct instrumentation, both in  terms of search space and computation time;

\item the overhead w.r.t.~standard DSE  turns out to be  always acceptable, even sometimes very low.   

\end{itemize}

These preliminary results  suggest that DSE  can be  efficiently leveraged to \LC\ coverage thanks to our optimisations. Yet, 
additional experiments on real-size programs are required to confirm that point.

Finally, we can observe that the very significant reduction of the path space does not always translate into an equivalent reduction of computation time. 
Optimisations of DSE tools certainly  play a role here.  
%
Indeed, several \pathcrawler\ 
optimisations 
may significantly accelerate usual DSE exploration for some programs.


\section{Beyond test data generation}  \label{sec:framework}


Section \ref{sec:labels} proves that \LC\ is a powerful coverage criterion, 
encompassing many standard criteria and a large subset of weak mutations. 
  Section~\ref{sec:smart} and Section~\ref{sec:experiments} demonstrate the feasibility of efficient ATG for   \LC, 
with a cost-effective integration in  DSE.  
We also sketched in Section~\ref{sec:naive} how to perform cheap \LC\ score computation.    
Everything put together, labels form the basis of a very powerful framework for automatic testing, handling many different criteria in a uniform fashion.  
We describe such a view in Figure~\ref{fig:framework}.

\begin{figure}[htbp]
\includegraphics[width = \columnwidth]{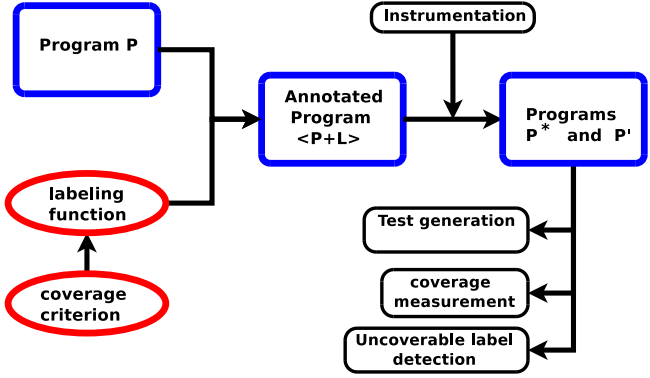}
\caption{\LC-coverage framework} \label{fig:framework} 
\end{figure}

Starting from a program $\Pg$ and a testing criterion $\C$, a predefined labelling function $\labfun_{\C}$  creates the $\C$-equivalent annotated program $\langle \Pg, \Lab \rangle$ (Theorem~\ref{thm:standard-lc} and Theorem~\ref{thm:wm-lc}). Then, we can perform automatic and efficient \LC\ score computation and \LC-based ATG through instrumentation (Theorem~\ref{thm:coverage} and Theorem~\ref{thm:tight-sound}). 
Finally, static analysis techniques  can be used on $\Pg^{\star}$ in order to detect uncoverable  
labels, i.e.~labels $\lab \mydef\ (\loc,\pred)$ for which there is no test data  $\dt$ such that $\dt \covers_{\Pg} \lab$. 
Static detection of uncoverable labels can help ATG tools by  avoiding wasting  time on infeasible objectives, as was observed in the case of mutation testing~\cite{JKS-12a}.   

\section{Related work}  \label{sec:related}

\myparagraph{Leveraging DSE to higher coverage criteria.} The need for enhancing DSE with better coverage criteria has already been pointed out in active testing (a.k.a~assertion-based testing) \cite{CKGJ-10,GLM-08b,KA-96} 
 and  in Mutation DSE \cite{PM-10,PM-11}.   
The present  work  generalizes these results and proposes ways of taming the potential blow-up, resulting in an effective support of advanced coverage criteria in DSE with only a  small overhead. 

Active testing targets  run-time errors by adding explicit branches into the program. It is similar to the Run-Time Error Coverage 
criterion presented in Section~\ref{sec:labels}. Labels are a more general approach. 
Interestingly, the direct instrumentation $\Pg'$ for this criterion is mostly equivalent to $\Pg^{\star}$ since additional branches can only trigger 
errors and stop the execution. Yet, active testing could benefit from the IDL optimisation. In that case only the {\sc idl-1} flavour makes sense since an execution cannot cover two 
 different run-time errors.  Finally, since most test objectives are (hopefully!) uncoverable for Run-Time Error Coverage, some approaches aim at combining DSE with static detection 
of uncoverable targets~\cite{CKGJ-10}. These techniques and heuristics can be reused for labels, and should be useful when many labels are uncoverable.

Mutation-based DSE \cite{PM-10,PM-11,PMK-10} is probably the work closest to ours. Following Offut {\it et al.} \cite{DO-91},  Papadakis {\it et al.} show that \WM\ can be reduced to 
branch coverage through the use of a variant of Mutant Schemata \cite{UOH-93}. 
This is pretty similar to  the direct encoding $\Pg'$ mentioned here. They propose essentially two variations of DSE for mutation testing: 
a black-box approach \cite{PM-10} based on a direct encoding similar to our DSE($\Pg'$) scheme,  
and a more ad hoc approach \cite{PM-10} preventing reuse of existing DSE tools but offering several optimisations.  
Papadakis {\it et al.} propose a variant of IDL, a dedicated search heuristic based on shortest paths \cite{PM-09} 
and an improvement of the direct encoding through the use of mutant identifiers  (following exactly Mutant Schemata).  
On the one hand, it ensures that a given path cannot go through several {\it different} mutants, on the other hand there is still an exponential blow-up 
of the search space in the worst case, and IDL cannot cover more than one mutant at once. 

We give a more generic view of the problem, identifying  labels and annotated programs  as  the key concept underlying the approach. 
We also clearly identify the limits and hypothesises of the method by defining the side-effect free fragment of \WM, proving soundness of direct instrumentation  and 
  providing a formalization of the path space ``complexification'' (Non-Tightness Theorem) induced by  direct instrumentation.   
Most important, we propose the tight instrumentation  which  completely prevents  complexification.  Finally, our optimisations  can be implemented in a pure  black-box setting   
 and we do not impose anything on the search heuristics, keeping room  for future improvements.

\myparagraph{Labels and optimized DSE.} The label-specific optimisations described here can be freely mixed  with other DSE optimisations.  
It is left as future work to explore which optimisations turn out to be the most effective for labels. As already stated, combining static discovery of uncoverable labels 
with DSE \cite{CKGJ-10} could be useful for often-uncoverable labels, such as those generated for Run-Time Error Coverage 
 or \MCC.   Another interesting idea is to adapt 
   DSE  search heuristics \cite{XTHS-09} by taking advantage of the dissimilarities between labels and branches, possibly getting inspiration from \cite{PM-09}.  

The IDL optimisation shows some similarities with  Look-Ahead pruning (LA)~\cite{BH-09}. Basically, LA takes advantage of (global) static analysis to prune  path prefixes 
which cannot reach any uncovered branches (it could also be adapted for labels).  
%
On $\Pg^{\star}$,  {\sc idl-1} is a very specific (but cheap) case of LA  while {\sc idl-2} is not: LA will prune all ``label paths'' pruned by {\sc idl-1} plus other normal paths  leading   only to already covered labels,   
while  {\sc idl-2} will prune several ``label paths'' at once thanks to  dynamic analysis.

\myparagraph{Automation of mutation testing.} Mutation coverage \cite{DLS-78,OU-01} has been established  as a powerful criteria through several experimental studies~\cite{ABL-05,OU-01}.
 Yet, it is  very difficult to automatize. Even mutation score computation is expensive in practice if not done wisely. Weak mutations \cite{Howden-82} relax mutation coverage  by 
abandoning the ``propagation step'', making \WM\ easier to compare with standard criteria and easier to test for. \WM\ has been experimentally proved to be almost equivalent to 
strong mutations~\cite{OL-94}, and from a theoretical point of view \WM\ subsumes many other criteria \cite{OV-96}. 

The few existing symbolic methods for mutation-based  ATG are based on the encoding proposed by Offutt {\it et al.} and  have already been discussed \cite{DO-91,PMK-10,PM-11}. 
The Mutation Schemata technique \cite{UOH-93} was   originally     developed in order to factorize the compilation costs of hundreds of similar mutants.   
%
%
%
%
Static analysis has been proposed for the  ``equivalent mutant detection'' problem~\cite{OC-94,NW-12} in a way similar  to what is  sketched in Section~\ref{sec:framework}.   

The side-effect free fragment of (atomic) \WM\ presented in this paper  seems to be a sweet spot of mutation testing: it is amenable to efficient automation 
and still very expressive. It is left as future work to identify  if something essential is lost within  this fragment. 
Finally,  our encoding of \WM\ into  \LC\ is orthogonal  
to and can be combined with some of the many techniques developed for efficient mutation testing, such as operator reduction \cite{ORZ-93,WM-95} or smart use of operators \cite{JKS-12a}.


 




%

\section{Conclusion}  \label{sec:conclusion}

We have defined label coverage, a new testing criterion which appears to be both expressive and    amenable to efficient automation.   
Especially, we have shown  that DSE can be extended for label coverage in a black-box manner with only a slight overhead, thanks 
to {\it tight instrumentation} and {\it iterative label deletion}. Experiments show that these two optimisations yield significant improvements.

This work bridges part of the gap between symbolic ATG techniques and coverage criteria. On the one hand, we show that  
DSE techniques  can be cheaply extended to support more advanced testing criteria, including side-effect free weak mutations.  
 On the other hand, we identify a powerful criterion amenable to  efficient automation, both in terms of ATG and coverage score computation.

Future work comprises better delimitation of the expressiveness of labels, designing DSE optimisations geared towards label coverage and carrying out  more thorough experimental evaluations.






\bibliographystyle{abbrv} 



\end{document}